\documentclass[journal,twoside,web]{ieeecolor}
\usepackage{lcsys}
\usepackage{cite}
\usepackage{amsmath,amssymb,amsfonts}
\usepackage{algorithmic}
\usepackage{graphicx}
\usepackage{textcomp}
\def\BibTeX{{\rm B\kern-.05em{\sc i\kern-.025em b}\kern-.08em
    T\kern-.1667em\lower.7ex\hbox{E}\kern-.125emX}}
\usepackage{array,multicol,multirow,booktabs,longtable,tabularx}
\usepackage{subfig}
\usepackage{amssymb,mathtools,dsfont,bbm,cuted,stmaryrd}
\usepackage{xspace,algorithmic,algorithm}

\usepackage{tikz}
\usetikzlibrary{math,arrows.meta,spy,automata,intersections,decorations,arrows,decorations.pathmorphing,decorations.markings,decorations.footprints,fadings,calc,trees,mindmap,shadows,decorations.text,patterns,positioning,shapes,matrix,fit,3d,plotmarks,svg.path}

\usepackage{pgfplots}

\usepgfplotslibrary{patchplots}

\usepackage{tikz-3dplot}
\usepackage{tkz-euclide}

\usepackage{placeins}
\usepackage{cuted}

\let\labelindent\relax
\usepackage{enumitem}

\newcommand{\be}{\begin{equation}}
\newcommand{\ee}{\end{equation}}

\def\bal#1\eal{\begin{align}#1\end{align}}
\def\baln#1\ealn{\begin{align*}#1\end{align*}}

\newcommand{\ben}{\begin{equation*}}
\newcommand{\een}{\end{equation*}}

\newcommand{\re}[1]{\mathbb{R}^{#1}}%

\newcommand{\bbm}{\begin{bmatrix}}
\newcommand{\ebm}{\end{bmatrix}}

\newcommand{\bBm}{\begin{Bmatrix}}
\newcommand{\eBm}{\end{Bmatrix}}

\newcommand{\bvm}{\begin{vmatrix}}
\newcommand{\evm}{\end{vmatrix}}

\newcommand{\bVm}{\begin{Vmatrix}}
\newcommand{\eVm}{\end{Vmatrix}}

\newcommand{\bpm}{\begin{pmatrix}}
\newcommand{\epm}{\end{pmatrix}}

\newcommand{\bnm}{\begin{matrix}}
\newcommand{\enm}{\end{matrix}}

\newcommand{\bi}{\begin{itemize}}
\newcommand{\ei}{\end{itemize}}

\newcommand{\bse}{\begin{subequations}}
\newcommand{\ese}{\end{subequations}}

\newcommand{\secref}[1]{Section~\ref{#1}\xspace}

\newcommand{\figref}[1]{Fig.~\ref{#1}\xspace}

\newenvironment{proof-sketch}{\noindent{ \textit{Sketch of Proof}:}\hspace*{0.5em}}

\newtheorem{prop}{Proposition}

\newtheorem{lem}{Lemma}

\newtheorem{rem}{Remark}

\newcommand{\eor}{\ensuremath{\hfill\blacklozenge}}

\title{On maximal positive invariant set computation for rank-deficient linear systems}
\author{Bogdan Gheorghe, Daniel Ioan, Cristian Flutur\\ Ionela Prodan, and Florin Stoican
\thanks{Bogdan Gheorghe, Daniel Ioan, Cristian Flutur, and Florin Stoican are with Faculty of Automation Control and Computer Science, National University of Science and Technology Politehnica Bucharest, Romania {\tt\small \{bogdan.gheorghe1807, daniel\char`_mihail.ioan, cristian.flutur, florin.stoican\}@upb.ro}}%
\thanks{Ionela Prodan is with the Univ. Grenoble Alpes, Grenoble INP$^\dagger$, LCIS, F-26000, Valence, France, $^\dagger$ Institute of Engineering and Management Univ. Grenoble Alpes.
        {\tt\small ionela.prodan@lcis.grenoble-inp.fr}}%
}

\usepackage{graphicx}      

\usepackage{amsmath,amssymb,mathtools,stmaryrd}
\usepackage{xcolor}

\usepackage[algo2e, linesnumbered, algoruled]{algorithm2e}

\usepackage{hyperref}

\usepackage{svg}

\pgfplotsset{compat=1.18}

\begin{document}

\pagestyle{empty} 
\maketitle
\thispagestyle{empty}

\begin{abstract}
The maximal positively invariant (MPI) set is obtained through a backward reachability procedure involving the iterative computation and intersection of predecessor sets under state and input constraints. 

However, standard static feedback synthesis may place some of the closed-loop eigenvalues at zero, leading to rank-deficient dynamics. This affects the MPI computation by inducing projections onto lower-dimensional subspaces during intermediate steps. By exploiting the Schur decomposition, we explicitly address this singular case and propose a robust algorithm that computes the MPI set in both polyhedral and constrained-zonotope representations.
\end{abstract}

\begin{IEEEkeywords}
maximal positive invariant set, constrained
zonotope, rank-deficient linear systems.
\end{IEEEkeywords}


\section{Introduction}
\label{sec:intro}

The computation reachable sets is a fundamental problem at the intersection of control theory, algebraic geometry, and convex optimization, serving to define the exact theoretical bounds for safe system operation \cite{ossareh_complexity_2024,marpi2024lcss}. As a cornerstone of modern constrained control, reachability theory provides the mathematical guarantees necessary for recursive feasibility\cite{chen1998quasi}  and stability in Model Predictive Control (MPC)\cite{rakovic2018handbook} applications across the aerospace, chemical, and robotics sectors. 

A prime example of reachable set is the Maximal Positive Invariant (MPI) set which denotes the largest positive invariant set which also respects feasibility constraints. It is computed via a set recurrence relation involving the iterative calculation and intersection of predecessor sets subject to state and input constraints. Its primary use is a terminal set in Model Predictive Control\cite{rakovic2018handbook}. MPC operates by iteratively solving a finite-horizon optimization problem at each sampling interval, applying only the initial control action before receding the prediction horizon. A primary vulnerability is the potential loss of recursive feasibility: an optimal state sequence calculated at time $k$ may drive the plant to a state at $k+1$ from which future constraint violations become unavoidable. To prevent this and ensure closed-loop stability and constraint admissibility beyond the prediction horizon, the MPC formulation typically forces the terminal state into an MPI set where the control action switches to a fixed terminal feedback law.


The feedback gain $K$ for this terminal law is usually computed via the resolution of a discrete-time algebraic Riccati equation. This unconstrained optimal design frequently results in placing one or more closed-loop eigenvalues at the origin ($0$). While mathematically optimal, this renders the resulting closed-loop state transition matrix rank-deficient, which presents significant geometric and computational challenges for calculating the required terminal MPI set. 
While the MPI remains full-dimensional it is the result of computations where the singular closed-loop matrix maps the system dynamics into a lower-dimensional subspace, fundamentally altering the structural properties of the intermediate sets generated during this recurrence sequence.

While established methodologies can compute these invariant sets for standard polyhedral constraints, current literature lacks efficient solutions for the severe computational burdens introduced by the set recurrence at high dimensions. To mitigate these, our recent work \cite{10552328} has leveraged constrained zonotopes. By adding algebraic equality constraints, these structures retain efficient linear operations while remaining closed under arbitrary intersections. This allows the exact MPI set recurrence to be executed entirely within the constrained zonotope framework. 

Our idea is straightforward but, to the best of our knowledge, was never treated in the literature:
\begin{enumerate}[label=\roman*)]
    \item exploit the Schur factorization of the closed-loop matrix to identify the sub-space corresponding to non-zero eigenvalues and compute the reduced MPI set;
    \item lift the reduced MPI back into the original space and intersect it with ``initial constraints'' to output the MPI set w.r.t. the original dynamics and feasible set. 
\end{enumerate}


The remainder of this paper is organized as follows. \secref{sec:prel} provides the preliminary mathematical background on MPI sets and their geometrical representations. \secref{sec:justication} introduces the problem and provides theoretical justification for rank-deficient closed loop systems. \secref{sec:main_idea} details the MPI-computing procedure in both the polyhedral and constrained zonotopic representations. \secref{sec:ill_example} provides illustrative examples, while \secref{sec:conclusions} draws the conclusions.

\noindent \textit{Notations}: For a $x \in \mathbb R^{n}$, its infinity norm as $\|x\|_{\infty} := \max(|x_1|, |x_2|, \ldots, |x_n|)$. For $A \in \mathbb R^{n\times n}$, $\text{eig}(A)$ is  the set of all its eigenvalues. $I_n\in \mathbb R^{n\times n}$ is the identity matrix, and $\mathbf{1}_n$ is the column vector of $n$ values of one 

\clearpage

\section{Preliminaries}
\label{sec:prel}

Consider the linear time-invariant system
\begin{equation}
\label{eq:dynamics_open_loop}
x_{k+1} = A x_k + B u_k,
\end{equation}
where $x_k, x_{k+1} \in \mathbb{R}^n$ denote the current and successor states; $u_k\in \mathbb{R}^m$ represents the input; $A \in \mathbb{R}^{n \times n}$, $B \in \mathbb{R}^{n \times m}$ are the state and input matrices. The sets $\mathcal{X} \subset \mathbb{R}^n$ and $\mathcal{U} \subset \mathbb{R}^m$ impose constraints on the system state and input, respectively.


Assuming the pair $(A,B)$ stabilizable, there exists a static gain $K \in \mathbb{R}^{m \times n}$ defining the feedback law $u_k = Kx_k$, which leads to the closed-loop matrix $A_\circ := A + BK$ that stabilizes the system. The resulting closed-loop dynamics of \eqref{eq:dynamics_open_loop} become
\begin{equation}
\label{eq:LTI_dynamics_closed}
x_{k+1} = A_\circ x_k .
\end{equation}

Furthermore, the open-loop constraints $x_k \in \mathcal{X}$ and $u_k \in \mathcal{U}$ imply the closed-loop state constraint $x_k \in \overline{\mathcal{X}}$ with 
\begin{equation}
\label{eq:barx}
\overline{\mathcal{X}} = \mathcal{X} \cap \{x:\, Kx \in \mathcal{U}\}.
\end{equation}

A popular tool in the arsenal of set-based methods is the so-called maximal positively invariant (MPI) set \cite{gilbert1991linear,kolmanovsky_theory_nodate}. It is often used to define the terminal set in model predictive control architectures \cite{rakovic2018handbook}, ensuring recursive feasibility since it is, by definition, safe (due to positive invariance) and non-conservative (the ``maximal'' property). For our purposes, it suffices to recall the standard set recurrence that computes it:
\begin{equation}
\label{eq:mpi_rec_standard}
\Omega_0=\overline{\mathcal X}, 
\qquad 
\Omega_{k+1}=A_\circ^{-1}\Omega_k \cap \overline{\mathcal X}.
\end{equation}
Under mild assumptions \cite[Thm.~3]{rakovic2022implicit}, the recurrence \eqref{eq:mpi_rec_standard} is guaranteed to terminate at a fixed point $\Phi_x := \Omega_{\bar k}=\Omega_{\bar k+1}$ for some finite index $\bar k$. Thus, an equivalent form to \eqref{eq:mpi_rec_standard} is
\begin{equation}
    \label{eq:mpi_rec_standard_2}
    \Phi_x=\bigcap\limits_{k=0}^{\bar k} \bigl\{x:\: A_\circ^kx \in \overline{\mathcal X}\bigr\}.
\end{equation}
Further insights and variations of the basic MPI construction can be found in \cite{blanchini2008set,houska_polyhedral_2025} and subsequent works.

\subsection*{Set descriptions}

While the recurrence \eqref{eq:mpi_rec_standard} is agnostic with respect to the nature of $\overline{\mathcal X}$, particular choices do influence the subsequent computations. A traditional option is the polyhedral representation \cite{ziegler2012lectures}. Writing $\mathcal X=\{F_Xx\leq \theta_X\}$, $ \mathcal U=\{F_Uu\leq \theta_U\}$ in their half-space form allows to express \eqref{eq:barx} as
\begin{equation}
\label{eq:hrep}
\overline{\mathcal X}:=\{F x \leq \theta\}=\biggl\{\bbm F_X\\ F_UK\ebm x\leq \bbm \theta_X\\\theta_U\ebm\biggr\},
\end{equation}
allows the recurrence \eqref{eq:mpi_rec_standard} to be written explicitly as
\begin{equation}
\label{eq:mpi-recurrence-polyhedral}
    \Phi_x=\bigcap\limits_{k=0}^{\bar k} \bigl\{x:\: FA_\circ^kx \leq \theta \bigr\}.
\end{equation}

An alternative representation is that of the constrained zonotope (CZ), first introduced in \cite{scott2016constrained} and increasingly used for set manipulations \cite{raghuraman2022set}. We call the compact set $\mathcal{CZ} \subset \re{n}$ a constrained zonotope if there exists $c \in \mathbb R^{n}, G \in \mathbb R^{n \times D}, F \in \mathbb \re^{p\times D}, \theta \in \mathbb R^{p}$ such that
\begin{equation}
\label{eq:con_poly_zono}
    \mkern-8mu\mathcal{CZ} = \bigl\langle c, G, F, \theta\bigr\rangle = \big \{x = c + G\lambda,\: F\lambda=\theta,\: \|\lambda\|_\infty \leq 1\bigr\}.
\end{equation}




Noteworthy, CZs are closed under set intersection \cite{scott2016constrained}:
\begin{equation}
\label{eq:closedIntersection}
    \mkern-10mu \mathcal{CZ}_1 \cap \mathcal{CZ}_2 = \left\langle \mkern-4mu c_1, \bbm G_1 & 0 \ebm,\mkern-4mu  \bbm F_1 & \hphantom{-}0 \\ 0 & \hphantom{-}F_2 \\ G_1 & -G_2 \ebm, \bbm \theta_1 \\ \theta_2 \\ c_2 - c_1\ebm \right\rangle.
\end{equation}
Considering the state and input constraint sets as constrained zonotopes\footnote{In many practical cases the constraint sets are hyperrectangles, and therefore equality constraints are not required in their definition.}
$\mathcal X=\langle c_X, G_X, [\;], [\;]\rangle, \quad \mathcal U=\langle c_U, G_U, [\;], [\;]\rangle$, allows, via \eqref{eq:closedIntersection}, the set \eqref{eq:barx} to be expressed\footnote{We abuse the notation and re-use $F,\theta$ for the equality part of the CZ description.} as a CZ:
\begin{multline}
\label{eq:x_constr_zon}
\overline{\mathcal X} = \left\langle c, G, F, \theta\right\rangle =\bigl\langle c, \bbm G& 0\ebm,\\ \bbm KG& -G_U\ebm, c_U-Kc\bigr\rangle.
\end{multline}
Based on \cite{10552328}, all sets $\Omega_k = \left\langle c_k, G_k, F_k, \theta_k\right\rangle$ in the recurrence \eqref{eq:mpi_rec_standard} are also CZs and can be computed recursively as
\begin{multline}
\label{eq:mpi_constr_zon}
\Omega_{k+1} =\Bigl\langle A_\circ^{-1}c_k, \bbm A_\circ^{-1}G_k& 0\ebm,\\
    \bbm F_k&0\\ 0& F\\ A_\circ^{-1}G_k& -G\ebm,  \bbm\theta_k\\\theta\\c-A_\circ^{-1}c_k\ebm\Bigr\rangle.
\end{multline}

\vspace{-1.5em}
\subsection*{Illustrative example}
We take the discrete invariant linear dynamics defined by: \newline
$$A = \bbm 1.38 & 0.76 \\ 0.16 & 1.87 \ebm,\: B = \bbm 1 \\ 1\ebm$$
with input and state constraints given by $\mathcal X = \{\|x\|_{\infty} \leq 1\}$, $\mathcal U=\{\|u\|_{\infty} \leq 1\}$. Typically, closing the loop can be done either by solving the algebraic Riccati equation or by doing explicit pole placement. Each returns a stabilizing static gain $K$. Computing the MPI for dynamics \eqref{eq:LTI_dynamics_closed} and feasible set \eqref{eq:barx} for each gain $K$ results in the iterations illustrated in Figure~\ref{fig:example_LTI_MPI}. 
\begin{figure}[!ht]
\centering
\subfloat[MPI set for Ricatti]{\label{fig:example_LTI_idare}\includegraphics[width=0.5\columnwidth]{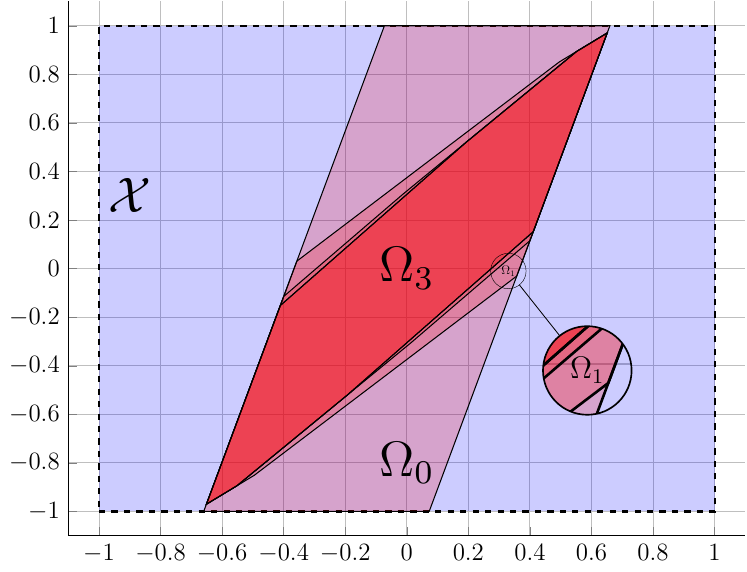}}\hfill
\subfloat[MPI set for pole placement]{\label{fig:example_LTI_place}\includegraphics[width=0.5\columnwidth]{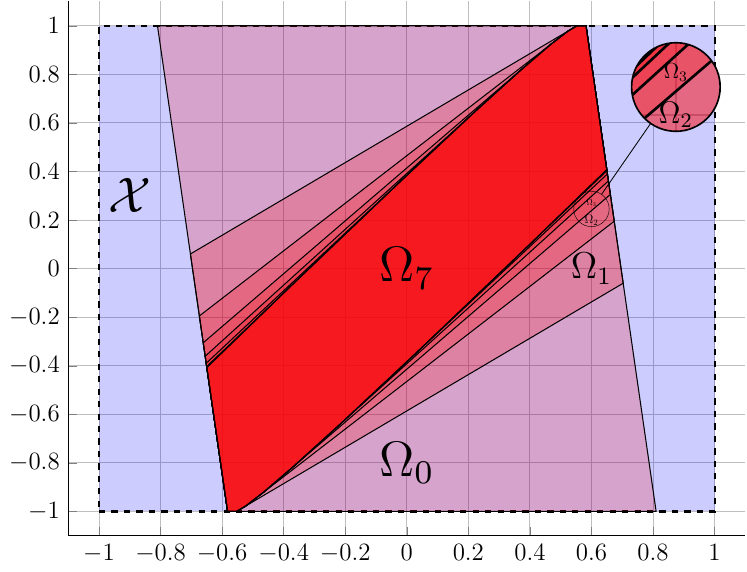}}
\caption{Example for computing the MPI set.}
\label{fig:example_LTI_MPI}
\end{figure}
Table~\ref{tab:MPI_method_comparison} shows that, even for this proof-of-concept example, significant differences appear in closed-loop in set shape and the value of $\bar k$, the index where the recurrence stops.
\begin{table}[!ht]
\centering
\renewcommand{\arraystretch}{1.35}
\setlength\tabcolsep{3pt}

\begin{tabular}{c|c|c|c}
Method & Poles & Gain ($K$) & Nr. of iter. ($k$)\\\hline
Ricatti & $\bbm 0.48 & 0.83\ebm$ & $\bbm 2.73 & -0.80\ebm$ & 3\\\hline
Placement & $\bbm 0.95 & 0.7\ebm$ & $\bbm 1.43 & 0.16\ebm$ & 7
\end{tabular}

\caption{Comparison of Ricatti and pole placement.}
\label{tab:MPI_method_comparison}
\setlength\tabcolsep{6pt}
\end{table}


\clearpage

\section{Problem formulation and justification}
\label{sec:justication}


The state matrix of the closed-loop dynamics \eqref{eq:LTI_dynamics_closed} may possess multiple eigenvalues at $\lambda = 0$ (equivalently, $0 \in \mathrm{eig}(A_\circ)$), possibly with different partial multiplicities. Such situations are not exceptional; rather, they naturally arise in a variety of standard control synthesis procedures.
%

One such situation is optimal control, where the closed loop system may get poles placed anywhere in the unit disc, provided the stabilizing solution is computed \cite{ionescu1997general}. We recall that the stabilizing solution of the Riccati equation is computed as the disconjugated invariant subspace of the Hamiltonian matrix, therefore the closed loop gain matrix $K$ may place the eigenvalues of $A+BK$ anywhere in the stable region, so it may even place more than one eigenvalue at $\lambda=0$. Another situation of interest may result from a dead-beat control strategy \cite{van1984deadbeat}, where all closed loop eigenvalues are placed at $\lambda=0$.

\subsection*{Numerical example}

For illustration and later use in the paper, we consider the following numerical example, where the gain matrix $K$ assigns parts of the closed loop spectrum to $\lambda=0$. Consider the controllable pair $(A,B)$ and the feedback gain matrix $K$: 

\[
{\small
A = \begin{bmatrix}
-14.85 & -5.20 & -14.75 & -11.90 & -20.10 & -14.55 \\
-\hphantom{0}8.85 & \hphantom{-}0.10 & -12.95 & -\hphantom{0}9.20 & -10.20 & -13.15 \\
\hphantom{-0}9.90 & \hphantom{-}6.60 & \hphantom{-}10.30 & \hphantom{-0}6.80 & \hphantom{-}13.80 & \hphantom{-}10.10 \\
-14.95 & -7.50 & -13.85 & -10.20 & -21.00 & -13.65 \\
-18.40 & -5.70 & -26.40 & -17.70 & -23.10 & -26.40 \\
-12.35 & -3.80 & -21.85 & -13.30 & -14.90 & -21.85
\end{bmatrix}
}
\]
$$
B =
\begin{bmatrix}
1 & 4 \\
3 & 4 \\
0 & 0 \\
0 & 2 \\
4 & 4 \\
4 & 2
\end{bmatrix},
\quad K =
\begin{bmatrix}
1 & 0 & 4 & 2 & 1 & 4 \\
3 & 1 & 2 & 2 & 4 & 2
\end{bmatrix}.
$$
For the resulting closed loop matrix $A_\circ=A+BK$, we compute the Real Schur form matrix ($A_\circ = USU^\top$). Matrix
\[
S = \left[\begin{tabular}{ccc|cc|c}
0.70 & -0.87 & 0.49 & -0.31 & \hphantom{-}35.16 & -13.00 \\
\hphantom{0.0}0 & \hphantom{-}0.50 & 0.12 & -0.23 & \hphantom{-0}2.27 & \hphantom{0}-0.85 \\
\hphantom{0.0}0 & \hphantom{-0.0}0 & 0.20 & -0.40 & \hphantom{0}-3.46 & \hphantom{-0}2.33 \\
\hline
\hphantom{0.0}0 & \hphantom{-0.0}0 & \hphantom{0.0}0 & -0.00 & \hphantom{0}-2.14 & \hphantom{-0}1.25 \\
\hphantom{0.0}0 & \hphantom{-0.0}0 & \hphantom{0.0}0 & \hphantom{-}0.00 & \hphantom{0}-0.00 & \hphantom{0}-0.00 \\
\hline
\hphantom{0.0}0 & \hphantom{-0.0}0 & \hphantom{0.0}0 & \hphantom{-0.0}0 & \hphantom{-00.0}0 & \hphantom{0}-0.00
\end{tabular}\right]
\]
has its eigenvalues at $\{0,0,0,0.2,0.5,0.7\}$. Furthermore, the eigenvalues at $\lambda=0$ belong to two Jordan cells: one of dimension $1$ and one of dimension $2$, \cite[Chapter 3]{horn2012matrix}.

Such a simple example already reveals the issue: either explicitly, as in \eqref{eq:mpi-recurrence-polyhedral}, or implicitly, as in \eqref{eq:mpi_constr_zon}, we manipulate powers of the form
\[
A_\circ^k = (USU^\top)^k = U S^k U^\top.
\]
This becomes numerically unstable whenever $S$ is singular, as is the case here. Therefore, the remainder of the paper exploits the structure of $S$ to perform computations in a well-behaved subspace, before lifting the results back to the original dimension.

\newpage

\section{Main idea}
\label{sec:main_idea}
In the singular case, matrix $A_\circ$ is no longer full rank. Consequently, its image is not full dimensional. As a result, set recurrence \eqref{eq:mpi_rec_standard} may behave erratically or fail altogether, since multiplication by $A_\circ^k$ effectively acts as a projection onto the subspace associated with nonzero poles. In the remainder of this section, we propose constructions that explicitly handle poles at zero, regardless of their nature, for both polyhedral and constrained-zonotope set descriptions.

\subsection{The polyhedral case}

We handle first the case of polyhedral set descriptions \eqref{eq:hrep}--\eqref{eq:mpi-recurrence-polyhedral}, as shown in the following result.

\begin{prop}
\label{prop:p}
Consider the singular closed-loop matrix 
\begin{equation}
\label{eq:schur}
    A_\circ=A+BK = U\left[\begin{tabular}{c|c}
         $S_{11}$& $S_{12}$ \\\hline
         0& $S_{22}$
    \end{tabular}\right]U^\top, 
\end{equation}
whose Schur decomposition has the $(2,2)$ block $S_{22}\in \mathbb{R}^{d_2\times d_2}$ nilpotent of degree $p+1<n$, i.e.,
\begin{equation}
\label{eq:nilpotent}
    \bigl(S_{22}\bigr)^{p+1} = 0, \qquad \text{eig}(S_{22}) = 0.
\end{equation}
The scalars $d_1>0$, $d_2>0$ satisfy $d_1+d_2=n$.

Introducing the shorthand notation
\begin{equation}
    \label{eq:T}
    T=\sum_{i=0}^{p} \bigl(S_{11}\bigr)^{-i}S_{12}\bigl(S_{22}\bigr)^{i},
\end{equation}
the MPI set associated with the state matrix \eqref{eq:schur} and constraints \eqref{eq:hrep} is given by

\begin{equation}
    \label{eq:mpi-schur}
    \Phi_x = \Phi^{[1]}_x \cap \Phi^{[2]}_x, 
\end{equation}

\noindent with
\begin{align}
    \label{eq:mpi-schur_1}
    \Phi_x^{[1]} &= \bigcap\limits_{k=0}^{p-1} \overline{\mathcal X}=\bigcap\limits_{k=0}^{p-1}\biggl\{x:\:FA_\circ^k x\leq \theta\biggr\}, \\
    \label{eq:mpi-schur_2}
    \Phi_x^{[2]} &=\bigcap\limits_{k=p}^{\bar k +p}\overline{\mathcal X}= \biggl\{x:\: F_z \bbm \bigl(S_{11}\bigr)^p &\bigl(S_{11}\bigr)^{p-1}T\ebm U x\leq \theta_z\biggr\}.
    \end{align}
The pair $(F_z, \theta_z)$ defines 
\begin{equation}
        \label{eq:mpi_z}
    \Phi_z\hphantom{^{1}} =\bigcap\limits_{k=0}^{\bar k}\mathcal Z=\{z:\: F_zz\leq \theta_z\},
\end{equation}
the MPI set, computed as in \eqref{eq:mpi_rec_standard}, associated with the dynamics $z^+=S_{11}z$ and constraint set $\mathcal{Z}=\{z:\,F U_1 z\leq\theta\}$. $U_1$ gathers the first $d_1$ columns of $U$ and $U_2$ the remaining $d_2$ columns.
\end{prop}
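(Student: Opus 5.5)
Split the intersection \eqref{eq:mpi-recurrence-polyhedral} into the first $p$ terms ($k=0,\dots,p-1$) and the remaining ones ($k\ge p$). The first group is exactly $\Phi^{[1]}_x$; nothing needs proving there beyond reading the definition (the notation $\bigcap_{k}\overline{\mathcal X}$ in \eqref{eq:mpi-schur_1} is shorthand for the preimages under $A_\circ^k$). All the work lies in showing that every constraint with $k\ge p$ depends on $x$ only through the $d_1$-dimensional vector $w:=\bbm S_{11}^p & S_{11}^{p-1}T\ebm U^\top x$ (I read the $Ux$ in \eqref{eq:mpi-schur_2} as $U^\top x$, i.e., Schur coordinates $y=U^\top x=(y_1,y_2)$). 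Once that is shown, the tail of the intersection becomes an MPI recurrence for $S_{11}$.

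The key algebraic step is a closed form for $S^k$ when $k\ge p+1$. Since $S$ is block upper triangular, $S^k=\bbm S_{11}^k & X_k\\ 0 & S_{22}^k\ebm$ with $X_k=\sum_{j=0}^{k-1}S_{11}^{k-1-j}S_{12}S_{22}^{j}$. Because $S_{22}^{p+1}=0$, the sum truncates at $j=p$. For $k\ge p+1$ this gives
\[
X_k=S_{11}^{k-1}\sum_{j=0}^{p}S_{11}^{-j}S_{12}S_{22}^j=S_{11}^{k-1}T,
\]
and $S_{22}^k=0$. Invertibility of $S_{11}$ holds because all zero eigenvalues of $S$ were collected in $S_{22}$ by the ordered Schur form. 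Hence $A_\circ^k x=U_1 S_{11}^{k-1}(S_{11}y_1+Ty_2)$ for $k\ge p+1$, which equals $U_1 S_{11}^{k-p-1}\,S_{11}^{p}\ldots$ applied to the appropriate vector.

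There is an index-matching issue. The statement starts the second group at $k=p$ and uses $\bbm S_{11}^p & S_{11}^{p-1}T\ebm$, which corresponds to $S^{p}$ with $X_p=S_{11}^{p-1}T$. That identity requires the $j=p$ term to be absent, i.e., effectively nilpotency of degree $p$. I would fix the convention so that the top-right block of $S^k$ factors as $S_{11}^{k-p}\,[S_{11}^p\ \ S_{11}^{p-1}T]$-type expressions for all $k\ge p$. Under that convention, $A_\circ^{p+m}x=U_1S_{11}^m w$ with $w=\bbm S_{11}^p & S_{11}^{p-1}T\ebm U^\top x$ and $m\ge 0$. I expect this bookkeeping of the indices ($p$ versus $p+1$, and whether $T$ sums to $p$ or $p-1$) to be the main obstacle. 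I would verify it first on the paper's numerical example, whose Jordan cells at zero have sizes 1 and 2.

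With this form, $FA_\circ^{p+m}x\le\theta$ becomes $FU_1S_{11}^m w\le\theta$, i.e., $S_{11}^m w\in\mathcal Z$. Intersecting over $m=0,\dots,\bar k$ gives exactly the condition $w\in\Phi_z$ from \eqref{eq:mpi_z}. Here I invoke finite determination of the MPI of $(S_{11},\mathcal Z)$ via \cite[Thm.~3]{rakovic2022implicit}: $S_{11}$ is Schur stable, and $\mathcal Z$ is the preimage of the compact $\overline{\mathcal X}$ under the injective $U_1$, so it is compact and contains the origin. This yields $\Phi^{[2]}_x=\{x: F_z w\le\theta_z\}$, so finitely many constraints suffice and all constraints for $m>\bar k$ are implied. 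Combining the two groups gives \eqref{eq:mpi-schur}.
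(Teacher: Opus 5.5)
Your proposal is correct and takes essentially the same route as the paper. Both pass to Schur coordinates $y=U^\top x$ and use nilpotency of $S_{22}$ to truncate the $S_{12}$-convolution, so that every constraint with $k\ge p$ becomes $FU_1S_{11}^{k-p}w\le\theta$; both then pull back the reduced MPI $\Phi_z$ through $w=\bbm S_{11}^p & S_{11}^{p-1}T\ebm U^\top x$. Your two corrections are also right. The $Ux$ in \eqref{eq:mpi-schur_2} should be $U^\top x$, and starting the tail at $k=p$ requires $(S_{22})^p=0$, i.e.\ $p$ equal to the largest zero-eigenvalue Jordan cell size, as in the paper's example. The paper's own proof passes over this second point, since its \eqref{eq:dynamics_nilpotent_b} only yields $y_{2,k}=0$ for $k\ge p+1$.
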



\begin{proof}
Under the Schur decomposition \eqref{eq:schur}, the dynamics \eqref{eq:LTI_dynamics_closed} and the associated set $\overline{\mathcal X}$ can be expressed in the transformed coordinates $y = U^\top x$, leading to
\begin{subequations}
\begin{align}
    \label{eq:y_dynamics}
    y_{k+1}&=\left[\begin{tabular}{c|c}
         $S_{11}$& $S_{12}$ \\\hline
         0& $S_{22}$
    \end{tabular}\right]y_k, \\
    \label{eq:y_set}\mathcal Y &= \{y:\: FUy\leq \theta\}.
\end{align}
\end{subequations}
Both the dynamics and the associated constraint set follow from the change of variables $x = Uy$ together with the orthogonality property $UU^\top = U^\top U = I$.

Decomposing $y=\bbm y_1^\top & y_2^\top \ebm^\top$ we use \eqref{eq:y_dynamics} to write the evolution of each component of the state
\begin{subequations}
\label{eq:dynamics}
    \begin{align}
        \nonumber y_{1, k+1} &= S_{11}\,y_{1, k} + S_{12}y_{2, k}\\
        &=\bigl(S_{11}\bigr)^{k+1}y_{1,0}+\sum\limits_{i=0}^k \bigl(S_{11}\bigr)^{k-i}S_{12}\bigl(S_{22}\bigr)^{i}y_{2,0},\\
        \nonumber y_{2, k+1} &= S_{22}y_{2, k}\\
        &=\bigl(S_{22}\bigr)^{k+1}y_{2,0}.
    \end{align}
\end{subequations}
Exploiting the nilpotence condition \eqref{eq:nilpotent} we rewrite \eqref{eq:dynamics} into
\begin{subequations}
\label{eq:dynamics_nilpotent}
    \begin{align}
        \nonumber y_{1, k+1} =&  \bigl(S_{11}\bigr)^{k+1}y_{1,0}\\
        &+          \label{eq:dynamics_nilpotent_a}\bigl(S_{11}\bigr)^{k}\sum\limits_{i=0}^{\min(k,\, p)} \bigl(S_{11}\bigr)^{-i}S_{12}\bigl(S_{22}\bigr)^{i}y_{2,0},\\
        \label{eq:dynamics_nilpotent_b}y_{2, k+1} =& \begin{cases}\bigl(S_{22}\bigr)^{k+1}y_{2,0}, & k<p,\\
        0, & k\geq p.
        \end{cases}
    \end{align}
\end{subequations}
For convenience, let us denote $z_k=y_{1,k+p}$ and take $T$ as in \eqref{eq:T}. Consequently, for any $k\geq p$, \eqref{eq:dynamics_nilpotent_a} is equivalently written as (we shift from $k$ to $k-p$)
\begin{equation}
\label{eq:z}
    z_{k+1} = S_{11} z_k, \quad z_0=\bigl(S_{11}\bigr)^py_{1,0}+\bigl(S_{11}\bigr)^{p-1}Ty_{2,0}.
\end{equation}

Using \eqref{eq:dynamics_nilpotent} in \eqref{eq:y_set} we observe a similar behavior for the constraints: for $k\geq p$ only the first component ($y_{1, k}$) is affected: 
\begin{multline}
\label{eq:z_set}
    \mathcal Y=\{FUy_{k}\leq \theta\}=\bigl\{F\bbm U_1 & U_2\ebm \bbm y_{1, k}\\ y_{2, k}\ebm\leq \theta\bigr\}\\
    \overset{\eqref{eq:dynamics_nilpotent}}{=}\{FU_1 y_{1,k}\leq \theta\}\overset{\eqref{eq:z}}{=}\{FU_1 z_{k-p}\leq \theta\}. 
\end{multline}
At this point, we have both ingredients required for the set recurrence \eqref{eq:mpi_rec_standard}: the dynamics given by the first part of \eqref{eq:z} and the constraint set $\mathcal{Z}=\{z:\: FU_1 z\leq \theta\}$. Applying the recurrence yields \eqref{eq:mpi_z}. Next, rewriting $\Phi_z$ using the second part of \eqref{eq:z} allows us to constrain explicitly $y_{1,0}$ and $y_{2,0}$, obtaining \eqref{eq:mpi-schur_2} which enforces the constraints $y_k \in \mathcal Y$ for all $k \geq p$. By additionally imposing the constraints $y_k \in \mathcal Y$ for $k \leq p$, and reversing the change of coordinates $x = Uy$, we obtain \eqref{eq:mpi-schur_1}. Intersecting them gives \eqref{eq:mpi-schur}, thus concluding the proof.
\end{proof}


\subsection{Constrained zonotopic case}

The same line of reasoning may be employed in the constrained zonotopic case \eqref{eq:mpi_constr_zon}--\eqref{eq:mpi_constr_zon}, as illustrated in the next result. 
\begin{prop}
\label{prop:cz}
    Consider the notation from \eqref{eq:schur}--\eqref{eq:T} and the constraint set given in its constrained zonotopic form $\overline{\mathcal X}=\langle c,G,F,\theta\rangle$, defined as in \eqref{eq:x_constr_zon}. Then the associated MPI set is given by 
    \begin{multline}
    \label{eq:mpi-schur-cz}
        \mkern-36mu\Phi_x = \mkern-2mu\biggl\langle \mkern-4mu c^{[1]},\mkern-2mu \bbm G^{[1]} & 0\ebm\mkern-2mu,\mkern-4mu \bbm F^{[1]} & 0\\ 0 & F^{[2]}\\ \bbm \bigl(S_{11}\bigr)^p &\bigl(S_{11}\bigr)^{p-1}T\ebm U^\top G^{[1]} & -G^{[2]}\mkern-4mu\ebm\mkern-4mu,\\
        \bbm \theta^{[1]}\\ \theta^{[2]}\\ c^{[2]}-\bbm \bigl(S_{11}\bigr)^p &\bigl(S_{11}\bigr)^{p-1}T\ebm c^{[1]}\ebm\biggr\rangle
    \end{multline}
    where the ``$[1]$'' parameters describe the constraints imposed until $k<p$:
    \begin{equation}
    \label{eq:omega_x_1_cz}
        \Phi_x^{[1]}=\bigcap\limits_{k=0}^{p-1} \overline{\mathcal X} = \bigl\langle c^{[1]}, G^{[1]}, F^{[1]}, \theta^{[1]}\bigr\rangle
    \end{equation}
and the ``$[2]$'' parameters describe the constraints applied to $k\geq p$ and, under dynamics \eqref{eq:z} and set recurrence \eqref{eq:mpi_constr_zon} which stops after $\bar k$ iterations \cite{10552328}:
    \begin{multline}
        \label{eq:omega_x_2_cz}
        \Phi_x^{[2]}=\bigcap\limits_{k=p}^{\bar k +p} {\mathcal X}\\
        =\{x:\: \bbm \bigl(S_{11}\bigr)^p &\bigl(S_{11}\bigr)^{p-1}T\ebm U^\top x\in \Phi_z\}, \theta^{[2]}\bigr\rangle.
    \end{multline}
    where
    \begin{equation}
        \label{eq:mpi_z_cz}
        \Phi_z =\bigcap\limits_{k=0}^{\bar k} {\mathcal Z}=\langle c^{[2]}, G^{[2]}, F^{[2]}, \theta^{[2]}\rangle,
    \end{equation}
    is the MPI set associated with the dynamics $z^+=S_{11}z$ and constraint set 
    \begin{equation}
    \label{eq:z-cz}
        \mathcal Z = \biggl\langle U_1^\top c, U_1^\top G, \bbm F\\ U_2^\top G\ebm, \bbm \theta\\ -U_2^\top c\ebm\biggr\rangle.
    \end{equation}
    All $(\cdot)^{[1]},(\cdot)^{[2]}$ parameters are obtained recursively as in \eqref{eq:mpi_constr_zon}.
\end{prop}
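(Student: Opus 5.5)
The plan is to mirror the proof of Proposition~\ref{prop:p}, replacing each polyhedral operation by its constrained-zonotopic counterpart. We rely on three facts: closure of CZs under intersection \eqref{eq:closedIntersection}, the recursion \eqref{eq:mpi_constr_zon}, and the fact that the preimage of a CZ under a linear map is again a CZ.

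\textbf{Step 1: change of coordinates and the split $k<p$ versus $k\geq p$.} As in Proposition~\ref{prop:p}, pass to $y=U^\top x$, so the dynamics become \eqref{eq:y_dynamics}. By \eqref{eq:dynamics_nilpotent}, the trajectory satisfies $y_{2,k}=0$ for all $k\geq p$. For $k\geq p$ we have $y_{1,k}=z_{k-p}$, with $z$ obeying \eqref{eq:z}. Consequently
\[
\Phi_x=\Bigl(\bigcap_{k=0}^{p-1}\{x:\,A_\circ^kx\in\overline{\mathcal X}\}\Bigr)\cap\{x:\,z_0(x)\in\Phi_z\},
\qquad z_0(x)=\bbm (S_{11})^p & (S_{11})^{p-1}T\ebm U^\top x .
\]
The first factor is exactly $p$ steps of \eqref{eq:mpi_constr_zon}, starting from \eqref{eq:x_constr_zon}. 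These steps involve only the powers $A_\circ^k$ with $k<p$, and they produce the ``$[1]$'' parameters in \eqref{eq:omega_x_1_cz}.

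\textbf{Step 2: the reduced constraint set $\mathcal Z$ in CZ form.} For $k\geq p$ we need $Uy_k=U_1z+U_2\cdot 0\in\overline{\mathcal X}$. This condition reads: there exists $\lambda$ with $\|\lambda\|_\infty\leq1$, $F\lambda=\theta$ and $U_1z=c+G\lambda$. Multiply the last equation by $U^\top$ and use orthogonality ($U_1^\top U_1=I$, $U_2^\top U_1=0$). It splits into $z=U_1^\top c+U_1^\top G\lambda$ together with the additional equality $U_2^\top G\lambda=-U_2^\top c$. This is precisely \eqref{eq:z-cz}. Since $S_{11}$ contains all the nonzero eigenvalues, it is invertible, so \eqref{eq:mpi_constr_zon} applies verbatim with $A_\circ$ replaced by $S_{11}$. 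It terminates after $\bar k$ steps by \cite{10552328}, yielding $\Phi_z=\langle c^{[2]},G^{[2]},F^{[2]},\theta^{[2]}\rangle$.

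\textbf{Step 3: pulling $\Phi_z$ back and intersecting.} Set $M=\bbm (S_{11})^p & (S_{11})^{p-1}T\ebm U^\top$. We must represent $\Phi_x^{[1]}\cap\{x:\,Mx\in\Phi_z\}$ without inverting $M$, which is rank deficient. Take $x=c^{[1]}+G^{[1]}\lambda_1$ with $F^{[1]}\lambda_1=\theta^{[1]}$, and $Mx=c^{[2]}+G^{[2]}\lambda_2$ with $F^{[2]}\lambda_2=\theta^{[2]}$. Stacking the generator vector $(\lambda_1,\lambda_2)$ gives a single CZ. Its center is $c^{[1]}$ and its generator matrix is $[G^{[1]}\;0]$. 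The three blocks of equality constraints are the two original ones plus the coupling equality $MG^{[1]}\lambda_1-G^{[2]}\lambda_2=c^{[2]}-Mc^{[1]}$. This is the mixed-map version of \eqref{eq:closedIntersection} and reproduces \eqref{eq:mpi-schur-cz}, modulo the factor $U^\top$ in the last entry of the right-hand side.

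\textbf{Main obstacle.} The delicate point is Step 3. It requires arguing that this ``lifted'' CZ really equals the set-theoretic intersection, namely that every $x\in\Phi_x^{[1]}$ with $Mx\in\Phi_z$ admits such a pair $(\lambda_1,\lambda_2)$, and conversely. Because $M$ has a nontrivial kernel, $\Phi_z$ cannot simply be mapped by an inverse, as \eqref{eq:mpi_constr_zon} does. I would verify both inclusions directly from the definition \eqref{eq:con_poly_zono}. A secondary point is checking that $\Phi_x$ is the genuine MPI set, i.e., that it equals \eqref{eq:mpi_rec_standard_2}. This holds because, for $k\geq p$, the condition $A_\circ^kx\in\overline{\mathcal X}$ is equivalent to $S_{11}^{k-p}z_0\in\mathcal Z$, by Step 2.
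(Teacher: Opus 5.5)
Your route is the paper's. You split at $k=p$, obtain $\mathcal Z$ by adding $y_2=0$ as an extra equality on the generator coefficients, run the recursion with the invertible $S_{11}$, and pull $\Phi_z$ back through $M=\bbm (S_{11})^p & (S_{11})^{p-1}T\ebm U^\top$ by an inverse-free intersection. Your Step~3 is more explicit than the paper's one-line ``leads to''. You are also right that the last right-hand-side block should be $c^{[2]}-Mc^{[1]}$, with the $U^\top$. What you call the main obstacle is immediate: substituting $x=c^{[1]}+G^{[1]}\lambda_1$ into $Mx=c^{[2]}+G^{[2]}\lambda_2$ gives both inclusions, and $\ker M$ plays no role. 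Step~1 needs that same construction, because $A_\circ$ is singular and \eqref{eq:mpi_constr_zon} literally uses $A_\circ^{-1}$. The $[1]$ sets must therefore be built as
\[
\{x\in\overline{\mathcal X}:\,A_\circ x\in\Omega_k\}=\Bigl\langle c,\bbm G&0\ebm,\bbm F&0\\0&F_k\\A_\circ G&-G_k\ebm,\bbm\theta\\\theta_k\\c_k-A_\circ c\ebm\Bigr\rangle
\]
(the paper's proof glosses over this too).

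The genuine gap is an index slip in Step~1, again shared with the paper's proof. Under \eqref{eq:nilpotent}, i.e.\ $S_{22}^{p+1}=0$, equation \eqref{eq:dynamics_nilpotent_b} gives $y_{2,k}=0$ only for $k\geq p+1$, not for $k\geq p$. Likewise, $z_0$ in \eqref{eq:z} equals $y_{1,p}+S_{11}^{-1}S_{12}S_{22}^{p}y_{2,0}$ rather than $y_{1,p}$; what holds is only $S_{11}^jz_0=y_{1,p+j}$ for $j\geq1$. So whenever $S_{22}^p\neq0$, the true constraint $A_\circ^px\in\overline{\mathcal X}$ appears in neither factor, and $z_0\in\mathcal Z$ constrains a fictitious state.

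Concretely, take $n=3$, $U=I$, $S_{11}=\tfrac12$, $S_{12}=\bbm 1&0\ebm$, $S_{22}=\bbm 0&1\\0&0\ebm$, $\overline{\mathcal X}=[-1,1]^3$ and $p=1$. Then $T=\bbm 1&2\ebm$ and $z_0=\tfrac12x_1+x_2+2x_3$. The point $x=(1,1,-\tfrac34)$ lies in $\Phi_x^{[1]}=\overline{\mathcal X}$ and has $z_0=0$, so it belongs to your $\Phi_x$. Yet $A_\circ x=(\tfrac32,-\tfrac34,0)\notin\overline{\mathcal X}$.

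The argument is correct once $p$ is taken with $S_{22}^p=0$, i.e.\ $p$ equal to the size of the largest Jordan block; this is what the paper's numerical example does ($p=2$). Alternatively, keep $S_{22}^{p+1}=0$, extend $\Phi_x^{[1]}$ to $k=p$, and lift with $\bbm (S_{11})^{p+1} & (S_{11})^{p}T\ebm U^\top$. Your ``secondary point'' fails exactly at $k=p$ for the same reason.
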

\begin{proof}
Using recurrence \eqref{eq:mpi_constr_zon} until step $k=p-1$ for closed-loop matrix \eqref{eq:schur} and constraint set \eqref{eq:x_constr_zon} gives $\Phi_x^{[1]}$ as in \eqref{eq:omega_x_1_cz}. Next, for $k\geq p$, we consider the same change of coordinates as before, $y=U^\top x$. Using definition \eqref{eq:con_poly_zono}, we have that $x_{k+1}\in \overline{\mathcal X}$ can be written as
\begin{subequations}
\label{eq:y_cz}
\begin{align}
\label{eq:y_cz_1}
    y_{1, k+1}&=U_1^\top c + U_1^\top G\lambda_{k+1},\quad F_1\lambda_{k+1} = \theta_1,\\
\label{eq:y_cz_2}
y_{2, k+1}&=U_2^\top c + U_2^\top G\lambda_{k+1},\quad F_2\lambda_{k+1} = \theta_2.
\end{align}    
\end{subequations}
Due to \eqref{eq:nilpotent} and as per \eqref{eq:dynamics_nilpotent_b} we have that $y_{2, k+1}=0$. This acts as an additional equality constraint on $\lambda_{k+1}$ in \eqref{eq:y_cz_2}. Introducing it into \eqref{eq:y_cz_1}, leads to inclusion $y_{1,k+1}\in \mathcal Z$, with $\mathcal Z$ defined as in \eqref{eq:z-cz}. With notation \eqref{eq:z} and the set \eqref{eq:z-cz}, set recurrence \eqref{eq:mpi_constr_zon} leads to \eqref{eq:mpi_z_cz}. Using the second part of \eqref{eq:z} and \eqref{eq:omega_x_1_cz} and \eqref{eq:omega_x_2_cz} leads to \eqref{eq:mpi-schur-cz}, thus concluding the proof.
\end{proof}

\subsection{Analysis}

Comparing Propositions~\ref{prop:p} and~\ref{prop:cz}, we observe that the same set names appear in both statements. This is not a mistake but a deliberate choice: the underlying geometric objects are identical. Indeed, constrained zonotopes and polyhedra have been shown to be equivalent set representations~\cite{raghuraman2022set}. The differences arise only in the specific algorithms used to manipulate them (in this case, the operations of matrix multiplication and set intersection). To emphasize this point, we present Algorithm~\eqref{alg:MPI_poles_in_zero}, which enumerates the computation steps for both polyhedral and constrained-zonotope representations.

\begin{algorithm}[!ht]
\SetKwComment{Comment}{}{}
\KwIn{$A_{\circ}$, $\overline{\mathcal{X}}$ as in \eqref{eq:hrep} or \eqref{eq:con_poly_zono}}
\KwOut{the MPI set $\Phi_x$}

\label{step:1}Compute the sorted real Schur form: $A_{\circ} = USU^T$ with $S$ and $p+1$ as in \eqref{eq:schur}
\BlankLine


\textbf{Select procedure:\\ 
\hphantom{poly}
Polyhedral \hspace{2.56em} \vrule  \hspace{3em} Constrained Zonotope}
\BlankLine

$\mathcal{Z}$ from \eqref{eq:z_set} \; \hspace{4.25em} \vrule  \hspace{3.25em} $\,$ 
$\mathcal{Z}$ from \eqref{eq:z-cz}
\BlankLine

Take $A_\circ = S_{11}$, $\overline{\mathcal{X}}=\mathcal{Z}$ and compute reduced MPI set
\BlankLine

\hspace{3.77em}$\Phi_z$ as in \eqref{eq:mpi_z}\; \hspace{4.25em} \vrule   \hspace{3.5em} $\,$ $\Phi_z$ as in \eqref{eq:mpi_z_cz}
\BlankLine

Compute auxiliary sets:
\BlankLine

\hspace{3.5em} $\Phi_{x}^{[1]}$ from \eqref{eq:mpi-schur_1} \; \hspace{3.5em} \vrule   \hspace{3.5em} $\,$ 
$\Phi_{x}^{[1]}$ from \eqref{eq:omega_x_1_cz}
\BlankLine

\hspace{3.5em} Compute $T$ as in \eqref{eq:T}
\BlankLine

\hspace{3.5em} $\Phi_{x}^{[2]}$ from \eqref{eq:mpi-schur_2} \; \hspace{3.5em} \vrule   \hspace{3.5em} $\,$ 
$\Phi_{x}^{[2]}$ from \eqref{eq:omega_x_2_cz}
\BlankLine

Compute MPI set $\Phi_{x} = \Phi_{x}^{[1]} \bigcap \Phi_{x}^{[2]}$
\caption{MPI set for singular cases }
\label{alg:MPI_poles_in_zero}
\end{algorithm}
Noteworthy, step~1 of the algorithm implements the \emph{ordered Schur} decomposition from \cite{brandts2002matlab}, modified such that it arranges the Jordan and Schur blocks in descending order from largest to smallest, in magnitude. Thus, all the blocks with zero eigenvalue (if any) accumulate in the $(2,2)$ block, as shown in \eqref{eq:schur}.

\noindent Several remarks are in order. 
\begin{rem}
The set recurrence \eqref{eq:mpi_rec_standard} terminates at some finite index $\bar k$. Although unlikely, it is possible that $\bar k < p$. In this case, the construction in \eqref{eq:mpi-schur} or \eqref{eq:mpi-schur-cz} terminates without the need to compute \eqref{eq:mpi_z} or \eqref{eq:mpi_z_cz} or, indeed, the parts of \eqref{eq:mpi-schur_1}, \eqref{eq:omega_x_1_cz} which correspond to indices $\bar k+1, \ldots, p-1$. \eor
\end{rem}
\begin{rem}
Formulations \eqref{eq:mpi-schur} and \eqref{eq:mpi-schur-cz} also provide geometric insight into why a singular matrix disrupts the standard recurrence \eqref{eq:mpi_rec_standard}. The set $\Phi_z$ is computed in $\mathbb{R}^{d_1}$. When embedded into $\mathbb{R}^n$ as $\Phi_x^{[2]}$ after the change of variable $y=U^\top x$, it becomes unbounded along the subspace $\mathbb{R}^{d_2}$. Consequently, the bounds that ensure a well-behaved set must arise explicitly from the first $p$ iterations, as collected in \eqref{eq:mpi-schur_1} and \eqref{eq:omega_x_1_cz}, respectively.\eor
\end{rem}


\begin{rem}
Index $p+1$, as defined in \eqref{eq:nilpotent}, corresponds to the dimension of the largest Jordan cell of $S_{22}$. Inspecting the Jordan indices through the Schur form of $S_{22}$, \cite{kublanovskaya1982construction}, provides additional insight into the nature of the constraints that arise at a given time instant $k$. 

For illustration, suppose that the block dimensions belong to the (not necessarily unique) ordered set $\{p_1 \leq p_2 \leq \ldots \leq p_m\}$, with multiplicities $\{j_1, j_2, \ldots, j_m\}$. Then, for indices satisfying $p_i < k \leq p_{i+1}$, the constraints applied at step $k$ (i.e., $A^k x \in \overline{\mathcal X}$) describe a geometric object embedded in a subspace of dimension $n-\sum_{\ell=1}^{i} p_\ell j_\ell$. Once $k \geq p = \max_{\ell=1,\ldots,m} p_\ell$, the corresponding subspace dimension reaches $d_1 = n-d_2 = n-\sum_{\ell=1}^{m} p_\ell j_\ell$.
    \eor
\end{rem}





\section{Illustrative example}
\label{sec:ill_example}

The following results were obtained using MPT3~\cite{MPT3} and CasADi~\cite{andersson2019casadi}. All computations were performed on a system running Ubuntu~22.04, equipped with an AMD Ryzen~9~3900X 3.8~GHz CPU (12 cores, 24 threads) and 48\,GB of RAM.

The MATLAB implementation corresponding to Prop.~\ref{prop:p} is available in the subdirectory {/replan-public/maximal-positive-invariant-set/LCSS-2026} of the GitLab repository:
\url{https://gitlab.com/replan/replan-public.git}.

\subsection{Numerical example in $\mathbb R^6$}

We recall the numerical example from \secref{sec:justication}. To this we attach state and input constraints $\mathcal X = \{x\in \mathbb R^{6}:\:\|x\|_{\infty} \leq 1\},\:  \mathcal U=\{u\in \mathbb R^{2}:\:\|u\|_{\infty} \leq 1\}$ which correspond to
\begin{equation*}
    \overline{\mathcal X}= \biggl\{\bbm I_6 \\ I_2K\ebm x\leq \bbm \mathbf{1}_6\\ \mathbf{1}_2\ebm\biggr\},
\end{equation*}
given as in \eqref{eq:barx}. Since in $S$, the upper-triangular matrix for the Schur factorization of $A_\circ$, we have two Jordan blocks (of dimensions $p_1=1$ and $p_2=2$), we have that the rank deficiency is $d_2 = p_1+p_2=3$ and $p=\max(p_1,p_2)=2$. Consequently, $d_1= n-d_2 = 3$ and $S_{11}$ is the $3\times 3$ matrix 
\begin{equation*}
    S_{11}=\begin{bmatrix}
0.70 & -0.87 & 0.49 \\
\hphantom{0.0}0 & \hphantom{-}0.50 & 0.12 \\
\hphantom{0.0}0 & \hphantom{-0.0}0 & 0.20
\end{bmatrix}.
\end{equation*}
Correspondingly, we compute the set $\mathcal Z\subset \mathbb R^3$ as shown below \eqref{eq:mpi_z}. Having both these elements we compute the reduced MPI set $\Phi_z$ from \eqref{eq:mpi_z} in $\overline{k}= 3$ iterations, as  depicted in \figref{fig:example_LTI_6dim}.

\begin{figure}[!ht]
\centering
\includegraphics[width=0.5\textwidth]{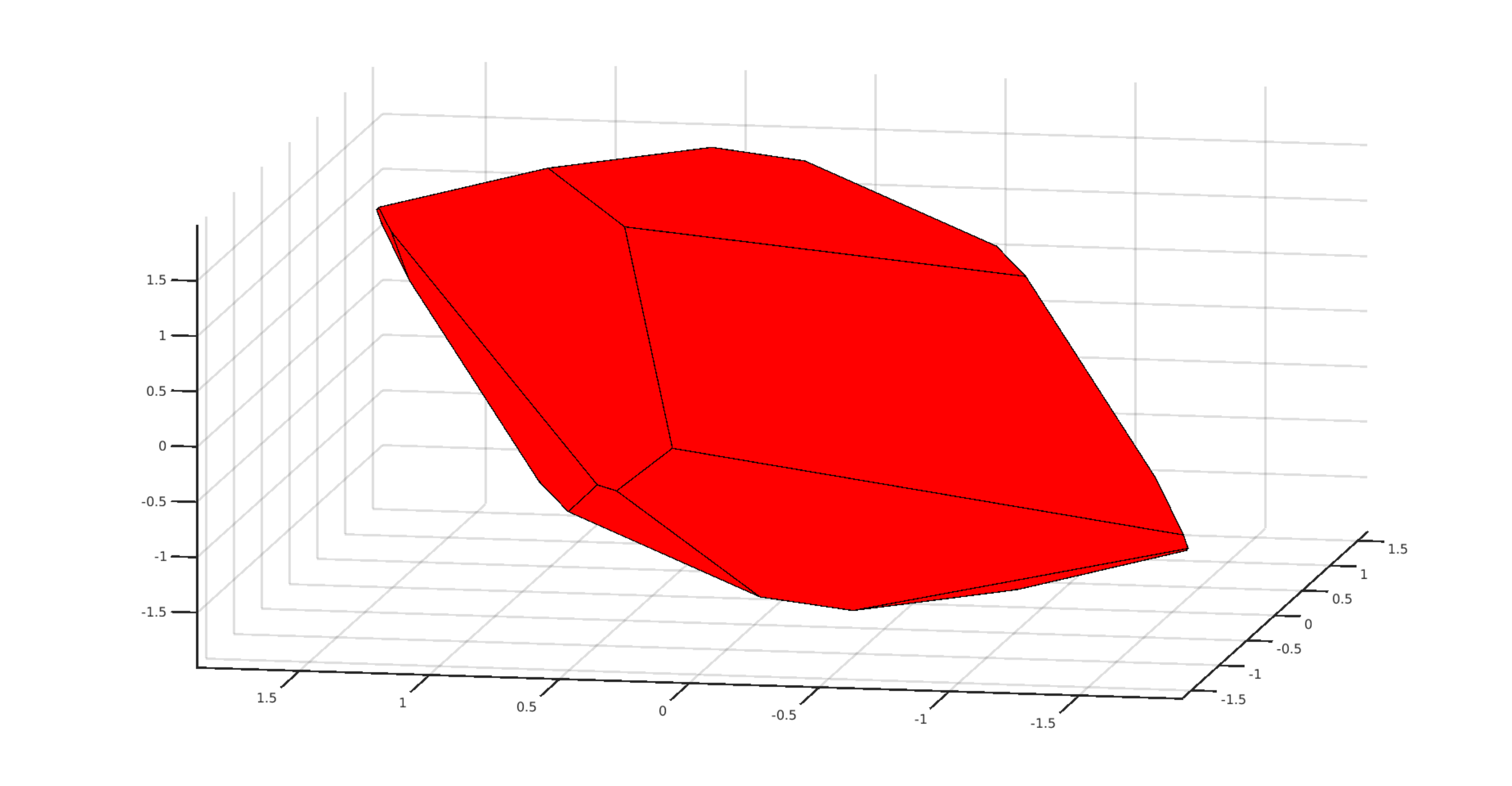}
\caption{MPI set $\Phi_z$ in $\mathbb R^3$.}
\label{fig:example_LTI_6dim}
\end{figure}

By selecting the blocks $S_{12}$ and $S_{22}$ from $S$ we compute, as in \eqref{eq:T}: 
\[
T = \begin{bmatrix}
-0.30 & 33.06 & -11.76 \\
-0.22 & \hphantom{0}2.23 & -\hphantom{0}0.82 \\
-0.39 & \hphantom{0}0.76 & -\hphantom{0}0.15
\end{bmatrix}
\]
which is used, as in \eqref{eq:mpi-schur_2} to obtain $\Phi_x^{[2]}\subset \mathbb R^6$. Set $\Phi_x^{[1]}\subset \mathbb R^6$ is obtained with $p=3$ iterations as in \eqref{eq:mpi-schur_1}. Intersecting the two, we obtain the desired result, the MPI set $\Phi_x$, characterizing the original dynamics and constraints. 
Total computation time for all operations was $0.4$ seconds.


\newpage

\subsection{The CSE example \cite{Leibfritz2006COMPleibCM}}
The Coupled Spring Experiment (CSE) considers a variable number of masses interconnected through springs and dampers. The system state consists of the positions and velocities of all masses, while the inputs are two external forces applied at the ends of the chain. The continuous-time model is given by
\begin{align}
\label{eq:cse1}
    \dot x &= 
    \underbrace{\begin{bmatrix}
    0 & I \\
    -M_{c}^{-1}K_{c} & -M_{c}^{-1}L_{c}
    \end{bmatrix}}_{\text{A}\in \mathbb R^{2\ell \times 2\ell}}x + 
    \underbrace{\begin{bmatrix}
    0 \\
    M_{c}^{-1}D_{c}
    \end{bmatrix}}_{\text{B}\in \mathbb R^{2\ell \times 2}} u,
\end{align}

where $M_{c} = \mu I$, $L_{c} = \delta I$, 
\begin{align}
    K_{c} = k\begin{bmatrix}
    1 & -1 & \cdots & 0 & 0 \\
    -1 & -2 & \ddots & 0 & 0 \\
    \vdots & \ddots & \ddots & \ddots & \vdots \\
    0 & 0 & \ddots & -2 & -1 \\
    0 & 0 & \cdots & -1 & 1
    \end{bmatrix} \nonumber, \; \text{and} \; D_{c} = \begin{bmatrix}
    1 & \hphantom{-}0 \\
    0 & \hphantom{-}0 \\
    \vdots & \hphantom{-}\vdots \\
    0 & \hphantom{-}0 \\
    0 & -1
    \end{bmatrix}, \nonumber
\end{align}
with $\mu = 4, \; \delta = 1, \; k = 1$. The system is discretized using forward Euler method with a sampling time of $1$ sec. The state and input constraints are: $\mathcal X = \{x\in \mathbb R^{2\ell}:\:\|x\|_{\infty} \leq 1\},\:  \mathcal U=\{u\in \mathbb R^{2}:\:\|u\|_{\infty} \leq 1\}$. By varying the number of masses through the parameter $\ell$, the state dimension becomes $2\ell$.

The application of Algorithm~\ref{alg:MPI_poles_in_zero} in the polyhedral branch (Prop.~\ref{prop:p}) to the CSE dynamics, closed using both pole placement (all eigenvalues  in the unit disk) and a Riccati-based method (yielding a rank-deficient matrix), is illustrated in Fig.~\ref{fig:plot_data_simulations}. 

\begin{figure}[!ht]
\centering
\includegraphics[width=\columnwidth]{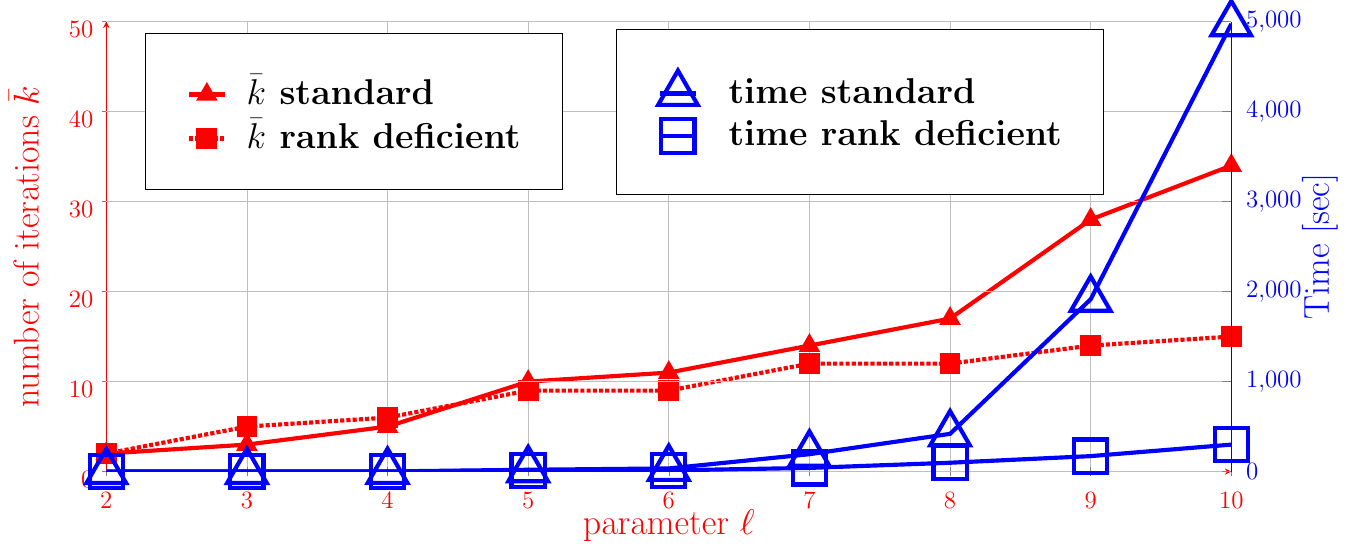}
\caption{Plot number of iterations and commutation time.}
\label{fig:plot_data_simulations}
\end{figure}

We observe that:
\begin{itemize}
    \item Starting from higher dimensions ($\ell = 8$, corresponding to $x \in \mathbb{R}^{16}$), the number of iterations required by the standard algorithm ($\bar k$) increases significantly, whereas the increase is moderate for the rank-deficient method;
    \item The computation time further highlights the effectiveness of the proposed approach: the standard method becomes considerably more expensive for $\ell \geq 8$, while the rank-deficient method remains computationally efficient;
    \item The number of inequalities $\bar q$ required to represent the resulting polyhedral MPI sets is also larger for the standard approach (starting from $\ell = 5$). For instance, for $\ell = 10$, the standard method yields $\bar q = 1496$ inequalities, whereas the rank-deficient method reduces to $\bar q = 704$.
\end{itemize}

The resulting MPIs sets are not geometrically the same. This outcome is expected because of the pole placement, which changes the initial set $\overline{\mathcal{X}}$ (as shown in Fig.~\ref{fig:example_LTI_MPI}).


\section{Conclusions}
\label{sec:conclusions}


We have presented a robust algorithm for computing the MPI set in the presence of rank-deficient dynamics, using the Schur factorization. The approach accommodates multiple zero eigenvalues with arbitrary multiplicities, and provides the resulting MPI set in both polyhedral and constrained-zonotope representations. Future work will extend this framework to the more general Maximal Output Admissible Set (MOAS), where we anticipate employing an SVD-based factorization.




\begin{thebibliography}{10}
\providecommand{\url}[1]{#1}
\csname url@samestyle\endcsname
\providecommand{\newblock}{\relax}
\providecommand{\bibinfo}[2]{#2}
\providecommand{\BIBentrySTDinterwordspacing}{\spaceskip=0pt\relax}
\providecommand{\BIBentryALTinterwordstretchfactor}{4}
\providecommand{\BIBentryALTinterwordspacing}{\spaceskip=\fontdimen2\font plus
\BIBentryALTinterwordstretchfactor\fontdimen3\font minus \fontdimen4\font\relax}
\providecommand{\BIBforeignlanguage}[2]{{%
\expandafter\ifx\csname l@#1\endcsname\relax
\typeout{** WARNING: IEEEtran.bst: No hyphenation pattern has been}%
\typeout{** loaded for the language `#1'. Using the pattern for}%
\typeout{** the default language instead.}%
\else
\language=\csname l@#1\endcsname
\fi
#2}}
\providecommand{\BIBdecl}{\relax}
\BIBdecl

\bibitem{ossareh_complexity_2024}
H.~R. Ossareh and I.~Kolmanovsky, ``\BIBforeignlanguage{en}{On {Complexity} {Bounds} for the {Maximal} {Admissible} {Set} of {Linear} {Time}-{Invariant} {Systems}},'' \emph{\BIBforeignlanguage{en}{IEEE Transactions on Automatic Control}}, pp. 1--8, 2024.

\bibitem{marpi2024lcss}
A.~Dey and S.~Bhasin, ``Computation of maximal admissible robust positive invariant sets for linear systems with parametric and additive uncertainties,'' \emph{IEEE Control Systems Letters}, vol.~8, pp. 1775--1780, 2024.

\bibitem{chen1998quasi}
H.~Chen and F.~Allgower, ``A quasi-infinite horizon nonlinear model predictive control scheme with guaranteed stability,'' \emph{Automatica}, vol.~34, no.~10, pp. 1205--1217, 1998.

\bibitem{rakovic2018handbook}
S.~V. Rakovi{\'c} and W.~S. Levine, \emph{Handbook of model predictive control}.\hskip 1em plus 0.5em minus 0.4em\relax Springer, 2018.

\bibitem{10552328}
B.~Gheorghe, D.-M. Ioan, F.~Stoican, and I.~Prodan, ``Computing the maximal positive invariant set for the constrained zonotopic case,'' \emph{IEEE Control Systems Letters}, vol.~8, 2024.

\bibitem{gilbert1991linear}
E.~Gilbert and K.~Tan, ``{Linear systems with state and control constraints: the theory and application of maximal output admissible sets},'' \emph{IEEE Tran. on Automatic Control}, vol.~36, no.~9, pp. 1008--1020, 1991.

\bibitem{kolmanovsky_theory_nodate}
I.~Kolmanovsky and E.~G. Gilbert, ``{Theory} and computation of disturbance invariant sets for discrete-time linear systems,'' \emph{Mathematical Problems in Engineering}, vol.~4, no.~4, pp. 317--363, 1998.

\bibitem{rakovic2022implicit}
S.~V. Rakovi{\'c} and S.~Zhang, ``The implicit maximal positively invariant set,'' \emph{IEEE Tran. on Automatic Control}, vol.~68, no.~8, pp. 4738--4753, 2023.

\bibitem{blanchini2008set}
F.~Blanchini and S.~Miani, \emph{Set-theoretic methods in control}.\hskip 1em plus 0.5em minus 0.4em\relax Springer, 2008, vol.~78.

\bibitem{houska_polyhedral_2025}
B.~Houska, M.~A. Müller, and M.~E. Villanueva, ``\BIBforeignlanguage{en}{Polyhedral control design: {Theory} and methods},'' \emph{\BIBforeignlanguage{en}{Annual Reviews in Control}}, vol.~60, p. 100992, 2025.

\bibitem{ziegler2012lectures}
G.~M. Ziegler, \emph{Lectures on polytopes}.\hskip 1em plus 0.5em minus 0.4em\relax Springer Science \& Business Media, 2012, vol. 152.

\bibitem{scott2016constrained}
J.~K. Scott, D.~M. Raimondo, G.~R. Marseglia, and R.~D. Braatz, ``Constrained zonotopes: A new tool for set-based estimation and fault detection,'' \emph{Automatica}, vol.~69, pp. 126--136, 2016.

\bibitem{raghuraman2022set}
V.~Raghuraman and J.~P. Koeln, ``Set operations and order reductions for constrained zonotopes,'' \emph{Automatica}, vol. 139, p. 110204, 2022.

\bibitem{ionescu1997general}
V.~Ionescu, C.~Oara, and M.~Weiss, ``General matrix pencil techniques for the solution of algebraic riccati equations: a unified approach,'' \emph{IEEE Transactions on Automatic Control}, vol.~42, no.~8, pp. 1085--1097, 1997.

\bibitem{van1984deadbeat}
P.~Van~Dooren, ``Deadbeat control: A special inverse eigenvalue problem,'' \emph{BIT Numerical Mathematics}, vol.~24, no.~4, pp. 681--699, 1984.

\bibitem{horn2012matrix}
R.~A. Horn and C.~R. Johnson, \emph{Matrix analysis}.\hskip 1em plus 0.5em minus 0.4em\relax Cambridge university press, 2012.

\bibitem{brandts2002matlab}
J.~H. Brandts, ``Matlab code for sorting real {Schur} forms,'' \emph{Numerical linear algebra with applications}, vol.~9, no.~3, pp. 249--261, 2002.

\bibitem{kublanovskaya1982construction}
V.~Kublanovskaya, ``Construction of a canonic basis for matrices and pencils of matrices,'' \emph{Journal of Soviet Mathematics}, vol.~20, no.~2, pp. 1929--1942, 1982.

\bibitem{MPT3}
M.~Herceg, M.~Kvasnica, C.~Jones, and M.~Morari, ``{Multi-Parametric Toolbox 3.0},'' in \emph{Proc.~of the European Control Conference}, Z\"urich, Switzerland, July 17--19 2013, pp. 502--510.

\bibitem{andersson2019casadi}
J.~A. Andersson, J.~Gillis, G.~Horn, J.~B. Rawlings, and M.~Diehl, ``{CasADi}: a software framework for nonlinear optimization and optimal control,'' \emph{Mathematical Programming Computation}, vol.~11, no.~1, pp. 1--36, 2019.

\bibitem{Leibfritz2006COMPleibCM}
F.~Leibfritz, ``{COMPleib: COnstrained Matrix–optimization Problem library} – a collection of test examples for nonlinear semidefinite programs, control system design and related problems.''\hskip 1em plus 0.5em minus 0.4em\relax Department of Mathematics, Univ. Trier, Germany, Tech. Rep., 2006.

\end{thebibliography}
\end{document}